\newcommand{\lyxaddress}[1]{
\par {\raggedright #1
\vspace{1.4em}
\noindent\par}
}
\theoremstyle{plain}
\newtheorem{thm}{Theorem}
  \theoremstyle{plain}
  \newtheorem{lem}[thm]{Lemma}
  \theoremstyle{plain}
  \newtheorem{prop}[thm]{Proposition}
  \theoremstyle{remark}
  \newtheorem{rem}[thm]{Remark}
\begin{document}

\title{Kolmogorov-Loveland Sets and Advice Complexity Classes}

\author{Thomas Hugel}

\maketitle

\lyxaddress{LIAFA - Université Paris 7 \& CNRS - case 7014\\
F-75205 Paris Cedex 13}
\begin{abstract}
Loveland complexity \citet{Loveland1969} is a variant of Kolmogorov
complexity, where it is asked to output separately the bits of the
desired string, instead of the string itself. Similarly to the resource-bounded
Kolmogorov sets we define Loveland sets.

We highlight a structural connection between resource-bounded Loveland
sets and some advice complexity classes.

This structural connection enables us to map to advice complexity
classes some properties of Kolmogorov sets first noticed by Hartmanis
\citet{Hartmanis1983} and thoroughly investigated in Longpré's thesis
\citet{Longpre1986}:
\begin{enumerate}
\item Non-inclusion properties of Loveland sets result in hierarchy properties
on the corresponding advice complexity classes;{\small \par}
\item Immunity properties of Loveland sets result in the non-existence of
natural proofs between the corresponding advice complexity classes,
in the sense of Razborov \& Rudich \citet{Razborov1997}.{\small \par}
\end{enumerate}
\end{abstract}

\section{Introduction}

Kolmogorov complexity is a measure of algorithmic randomness. If we
consider words of length $n$ over the boolean alphabet $\Sigma=\left\{ 0,1\right\} $,
then $KS\left(x\right)$ is defined to be the length of the smallest
input enabling a given universal Turing machine $U$ to output $x$.
It is possible to {}``hide'' $x$ in the input, so $KS\left(x\right)\leq l\left(x\right)+O\left(1\right)$
(where $l\left(x\right)$ is the length of $x$). Since there are
$2^{n}-1$ words in $\Sigma^{\leq n-1}$ but $2^{n}$ words in $\Sigma^{n}$,
by the pigeonhole principle, at least one word of length $n$ cannot
be output with a smaller input. Such a word is called \emph{incompressible}.
Numerous applications of this result, known as \emph{the Incompressibility
Method} are given in the textbook of Li \& Vitanyi \citet{Li2008}.
The same argument shows that almost all words are almost incompressible.

The Shannon-Lupanov theorem \citet{Shannon1949,Lupanov1958} states
that any boolean function on $n$ variables can be computed by a circuit
of size at most $\frac{2^{n}}{n}\left(1+o\left(1\right)\right)$.
This bound is tight; moreover almost all boolean functions have almost
this complexity. Trakthenbrot \citet{Trakhtenbrot1984} drew a parallel
between boolean functions having a circuit complexity of at least
$\left(1-\varepsilon\right)\frac{2^{n}}{n}$ and words such that $KS\left(x\right)\geq\left(1-\varepsilon\right)l\left(x\right)$.
Such an analogy suggests that there may be a connection between hard
boolean functions and incompressible words.

Indeed, Karp \& Lipton noticed that languages in $P/\mathrm{log}$
could be computed by {}``small circuits with easy descriptions''.
Looking for some kind of reciprocal, Hermo \& Mayordomo \citet{Hermo1994}
gave a nice characterization of the advice complexity class $P/\mathrm{log}$
in terms of the resource-bounded Kolmogorov complexity of its circuits.
Assuming that $t\left(n\right)$ and $s\left(n\right)$ are constructible
functions, a sequence of words $\left(x_{n}\right)$ of length $n$
belongs to the \emph{Kolmogorov set} $KS\left[f\left(n\right),t\left(n\right),s\left(n\right)|y\left(n\right)\right]$
iff for each integer $n$, $x_{n}$ is computable by the universal
Turing machine $U$ with an input of length at most $f\left(n\right)$
in time $t\left(n\right)$, in space $s\left(n\right)$ and with knowledge
of some word $y\left(n\right)$. Using the $P$-completeness of the
Circuit Value Problem, they proved that $P/\mathrm{log}$ is the set
of languages decidable by circuits belonging to $KS\left[O\left(\log n\right),\mathrm{poly}\left(n\right),+\infty|n\right]$
or, equivalently, by circuits belonging to $KS\left[O\left(\log n\right),+\infty,O\left(\log n\right)|n\right]$.

Building on Hartmanis \citet{Hartmanis1983}, Longpré \citet{Longpre1986}
stated many structural properties of the above Kolmogorov sets. However,
the characterization of Hermo \& Mayordomo does not make it possible
to map the properties of the Kolmogorov sets to the advice complexity
classes, since several different circuits may compute the same function
and not all words encode circuits. To do so one needs a more narrow
approach.

The purpose of the present work is to provide a characterization of
some advice complexity classes by Kolmogorov sets enabling to map
some of the properties of the latter to the former:
\begin{enumerate}
\item Non-inclusion properties of Kolmogorov sets result in hierarchies
properties on the corresponding advice complexity classes;
\item Immunity properties of Kolmogorov sets result in the non-existence
of natural proofs between the corresponding advice complexity classes,
in the sense of Razborov \& Rudich \citet{Razborov1997}.
\end{enumerate}
In a breakthrough paper \citet{Razborov1997}, Razborov \& Rudich
identified some properties shared by all known proofs of lower bounds
on the non-monotone circuit complexity of some individual boolean
functions, which they called \emph{natural properties}, and showed
that such properties would not be sufficient to derive superpolynomial
lower bounds for circuits (with some hardness assumption).

We believe that Kolmogorov sets are a convenient substitute for diagonalization,
since they are more compact and explicit. In particular, the results
presented in this paper may be viewed as resulting from diagonalization.

\section{Connection between Kolmogorov Sets and Advice Complexity Classes}

\subsection{Framework\label{sub:Framework}}

\paragraph{Kolmogorov-Loveland Complexity.}

To make the connection as clear as possible we do not use simple Kolmogorov
complexity $KS$ but a variant introduced by Loveland \citet{Loveland1969}
called \emph{decision Kolmogorov complexity} ($KD$) in the classification
of Uspensky \& Shen \citet{Uspensky1996}, and also referred to as
\emph{uniform complexity} in \citet{An-Introduction-to-Kolmogorov-Complexity-and-Its-Applications::1997::LiVitanyi}.
In this variant, the universal Turing machine $U$ is not required
to output the word $x$ but to give $x\left[i\right]$ (the $i^{\mathrm{th}}$
bit of $x$) upon request. The input of $U$ consists thus in a self-delimited
tuple $\left\langle p,i,y\right\rangle $ where $p$ is the actual
input, $i$ is the number of the asked bit of $x$, and $y$ is a
word given for free as an auxiliary input. We take as usual $\left\langle a,b\right\rangle =1^{l\left(a\right)}0ab$
and $\left\langle a,b,c\right\rangle =\left\langle \left\langle a,b\right\rangle ,c\right\rangle $.

\paragraph{Kolmogorov-Loveland Sets.}

In the setting of Kolmogorov-Loveland complexity, the required output
is a single bit, so it does not make sense to measure time and space
with respect to the output. Instead we are going to measure them with
respect to the input. This will allow a simpler connection with advice
complexity classes. So we say that a sequence of words $x_{n}$ of
length $n$ belongs to $KD\left[f\left(n\right),t\left(u\right),s\left(u\right)|y\left(n\right)\right]$
iff for each integer $n$ there exists a program $p\left(n\right)$
of length at most $f\left(n\right)$ such that for all $i$ between
$0$ and $n-1$, $U\left(\left\langle p\left(n\right),i,y\left(n\right)\right\rangle \right)=x\left[i\right]$
and the computation is done in time at most $t\left(u\right)$ and
in space at most $s\left(u\right)$, where $u=l\left(\left\langle p\left(n\right),y\left(n\right)\right\rangle \right)$.
The choice of $u$ may seem rather strange at first glance, but we
choose this definition so that the connection with advice complexity
classes may be as simple as possible. Moreover bounding resources
with respect to input size makes this complexity \emph{prefix-monotonic},
in the sense that the program $p$ and the word $y$ used for a word
$x$ in time $t\left(u\right)$ and in space $s\left(u\right)$ can
also be used for all of $x$'s prefixes.

\paragraph{Advice Complexity Classes.}

The advice complexity classes we are going to consider are of the
form $DTISP\left(t\left(u\right),s\left(u\right)\right)/f\left(n\right)$.
Given two fully time and space constructible functions $t\left(u\right)$
and $s\left(u\right)$, a language $L$ over the alphabet $\Sigma=\left\{ 0,1\right\} $
is in $DTISP\left(t\left(u\right),s\left(u\right)\right)$ iff there
exists a Turing machine deciding it in time $O\left(t\left(u\right)\right)$
and in space $O\left(s\left(u\right)\right)$. Now the advice complexity
classes are defined as follows: let $\mathcal{C}$ be a complexity
class. $L\in\mathcal{C}/f\left(n\right)$ iff there exists a language
$L'$ in $\mathcal{C}$ such that for all integer $n$, there exists
a word $w_{n}$ (the advice) of length at most $f\left(n\right)$
such that for any word $x$ of length $n,$ $x\in L\iff\left\langle w_{n},x\right\rangle \in L'$.

\paragraph{Characteristic Words Sequence of a Language.}

Given a language $L$, we consider its \emph{characteristic words
sequence}, defined as follows. First we recall that a word $x$ of
length $n$ over the alphabet $\Sigma$ can be considered (in the
binary numeral system) as an integer $\mathrm{int}\left(x\right)$
between $0$ and $2^{n}-1$. We denote the inverse function from integers
to words as $\mathrm{word}_{n}\left(i\right)$. Note that $\mathrm{word}_{n}\left(i\right)$
will have length $n$ and may begin by a sequence of zeros. In particular
the word $0^{n}$ corresponds to the integer $0$ and the word $1^{n}$
to the integer $2^{n}-1$ (keep this in mind).  Now the characteristic
word of the language $L$ for length $n$ is a word of length $N=2^{n}$
defined by $L_{n}\left[i\right]=\boldsymbol{1}_{\mathrm{word}_{n}\left(i\right)\in L}$
(i.e.~$1$ if $\mathrm{word}_{n}\left(i\right)\in L$ and $0$ otherwise).

\subsection{Connection Lemma}
\begin{lem}
\label{lem:Connection}Let $f\left(n\right)$, $t\left(u\right)$
and $s\left(u\right)$ be integer functions such that $t\left(u\right)$
and $s\left(u\right)$ are non-decreasing and fully time and space
constructible, $t\left(u\right)\geq u$ and $s\left(u\right)\geq\log u$:
\begin{enumerate}
\item if $L\in DTISP\left(t\left(u\right),s\left(u\right)\right)/f\left(n\right)$,
then $\left(L_{n}\right)\in KD\left[f\left(\log N\right)+O\left(1\right),O\left(t\left(u\right)\log t\left(u\right)\right),O\left(s\left(u\right)\right)|1^{\log N}\right]$
\item if $\left(L_{n}\right)\in KD\left[f\left(\log N\right),t\left(u\right),s\left(u\right)|1^{\log N}\right]$,
then $L\in DTISP\left(t\left(u\right),s\left(u\right)\right)/f\left(n\right)$.
\end{enumerate}
\end{lem}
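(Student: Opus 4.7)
The two directions are essentially dual translations between a program that outputs bits of the characteristic word and a Turing machine that decides $L$ with advice. The key observation is that each index $i\in\{0,\ldots,N-1\}$ of the characteristic word $L_n$ corresponds bijectively via $\mathrm{word}_n$ to a word $x$ of length $n$, and that the auxiliary input $1^{\log N}=1^{n}$ communicates the value of $n$ (so that $i$ can be padded to length $n$). Under the convention $u=l(\langle p(n),y(n)\rangle)$, the parameter $u$ on the $KD$ side matches, up to $O(1)$ tupling overhead, the input length to the decider on the $DTISP$ side, which is what makes the reduction quantitatively tight.

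\emph{Forward direction.} Assume $L\in DTISP(t(u),s(u))/f(n)$ via a machine $M$ and advice $(w_n)$ with $|w_n|\le f(n)$. I take as program $p(n)$ a pairing of a constant-size encoding $e_M$ of $M$ together with $w_n$, so $|p(n)|\le f(n)+O(1)=f(\log N)+O(1)$. On input $\langle p(n),i,1^{n}\rangle$, the universal machine $U$ reads $n$ from the unary part, forms $x=\mathrm{word}_n(i)$ by padding the binary expansion of $i$ to length $n$, and simulates $M$ on input $\langle w_n,x\rangle$. By the definition of advice classes, the output is $\mathbf{1}_{x\in L}=L_n[i]$. The input to $M$ has length $\Theta(f(n)+n)=\Theta(u)$, so the direct simulation runs in time $O(t(u))$ and space $O(s(u))$; the standard universal simulation adds only a constant space overhead and a $\log t(u)$ time factor, yielding the announced bounds.

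\emph{Reverse direction.} Assume $(L_n)\in KD[f(\log N),t(u),s(u)\,|\,1^{\log N}]$ via programs $p(n)$ with $|p(n)|\le f(n)$. I set the advice to $w_n:=p(n)$ and define $L'$ as the language decided by the following machine $M'$: on input $\langle w_n,x\rangle$ with $|x|=n$ (recoverable from the tuple separator), compute $i:=\mathrm{int}(x)$, generate $1^{n}$, and output $U(\langle w_n,i,1^{n}\rangle)$. By construction $\langle w_n,x\rangle\in L'$ iff $L_n[\mathrm{int}(x)]=1$ iff $x\in L$. The input to $M'$ has length equal, up to $O(1)$ tupling overhead, to $u=l(\langle p(n),1^{n}\rangle)$, so the resource bounds transfer directly: $M'$ runs in time $O(t(u))$ and space $O(s(u))$, placing $L'$ in $DTISP(t(u),s(u))$ and hence $L$ in $DTISP(t(u),s(u))/f(n)$.

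\emph{Main obstacle.} The only delicate point is the bookkeeping on $u$. On the $KD$ side the paper fixes $u=l(\langle p(n),y(n)\rangle)$, while on the $DTISP$ side $u$ is implicitly the input length of the decider, namely $l(\langle w_n,x\rangle)$ with $|x|=n$. Since $|y(n)|=|1^{n}|=n=|x|$ and $|w_n|=|p(n)|$, the two quantities agree up to the $O(1)$ overhead of the pairing, and the monotonicity together with full constructibility of $t$ and $s$ absorbs this discrepancy. The $+O(1)$ additive term in the program length and the $\log t(u)$ slowdown in time are the expected artefacts of encoding $M$ into a finite prefix and of universal-machine simulation, respectively.
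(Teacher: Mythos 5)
Your proof is correct and follows essentially the same route as the paper's: in both directions you translate between the bit-index $i$ and the padded word $x=\mathrm{word}_n(i)$ using the unary auxiliary input $1^{n}$ to carry $n$, identify the advice $w_n$ with the program $p(n)$ (up to a constant-size encoding of the decider), and account for the Hennie--Stearns-type universal simulation overhead of a $\log t(u)$ time factor and constant-factor space. The bookkeeping you flag on $u$ — that $l(\langle p(n),1^{n}\rangle)$ and $l(\langle w_n,x\rangle)$ agree up to $O(1)$ — is exactly the point the paper's definition of $u$ is engineered to make work, so nothing is missing.
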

\begin{proof}
The main idea is to switch from integers to strings of a given length
and vice-versa.
\begin{enumerate}
\item if $L\in DTISP\left(t\left(u\right),s\left(u\right)\right)/f\left(n\right)$,
then there exists a language $L'$ decided by a Turing machine $M'$
in time $O\left(t\left(u\right)\right)$ and in space $O\left(s\left(u\right)\right)$
such that for all integer $n$, there exists an advice word $w_{n}$
of length at most $f\left(n\right)$ such that $x\in L\iff\left\langle w_{n},x\right\rangle \in L'$.
Now consider the Turing machine $M$ which on input $\left\langle p,i,y\right\rangle $
computes $n=l\left(y\right)$ and $x=0^{n-l\left(i\right)}i$, and
simulates $M'$ on $\left\langle p,x\right\rangle $. By construction,
$M\left\langle w_{n},i,y\right\rangle =L_{n}\left[i\right]$ and $M$
runs in time $O\left(u+t\left(u\right)\right)=O\left(t\left(u\right)\right)$
and in space $O\left(\log u+s\left(u\right)\right)=O\left(s\left(u\right)\right)$.
Now $M$ can be simulated by the universal Turing machine $U$ (which
has a fixed number of tapes) in time $O\left(t\left(u\right)\log t\left(u\right)\right)$
and in space $O\left(s\left(u\right)\right)$ thanks to the simulation
method of Hennie \& Stearns \citet{Two-Tape-Simulation-of-Multitape-Turing-Machines::1966::HennieStearns}.
\item if $\left(L_{n}\right)\in KD\left[f\left(\log N\right),t\left(u\right),s\left(u\right)|1^{\log N}\right]$,
then for all integer $n$, there exists a program $p_{n}$ of length
at most $f\left(n\right)$ such that for all integer $i$ between
$1$ and $N$, $U\left\langle p_{n},i,1^{n}\right\rangle =L_{n}\left[i\right]$.
Now $p_{n}$ can be used as an advice, as follows: consider the Turing
machine $M$ which on input $\left\langle w,x\right\rangle $ computes
$n=l\left(x\right)$, $i=x$ without the inital zeros, and simulates
$U$ on $\left\langle w,i,1^{n}\right\rangle $. Then $M\left\langle p_{n},x\right\rangle =\boldsymbol{1}_{x\in L}$
and this computation is done in time $O\left(u+t\left(u\right)\right)=O\left(t\left(u\right)\right)$
and in space $O\left(\log u+s\left(u\right)\right)=O\left(s\left(u\right)\right)$.
\end{enumerate}
\end{proof}

\section{Transfer of Properties}

We are going to use the above Connection Lemma to transfer properties
of Loveland sets to the corresponding advice complexity classes:
\begin{enumerate}
\item Non-inclusion properties of Loveland sets result in hierarchy properties
on the corresponding advice complexity classes;
\item Immunity properties of Loveland sets result in the non-existence of
natural proofs between the corresponding advice complexity classes.
\end{enumerate}
So we shall first establish the properties of Loveland sets. They
are very similar to the properties established by Longpré in his thesis,
but we must revisit them because we work with Loveland complexity
instead of standard Kolmogorov complexity, and moreover we measure
the resources with respect to the input instead of the output.

\subsection{Non-Inclusions and Hierarchies}

\subsubsection{Sensitivity to Advice Length.}

The following proposition is analogous to corollary 3.2 of Longpré's
thesis \citet{Longpre1986}. Here the proof is simpler because $KD$
is prefix-monotonic.
\begin{prop}
\label{fac:KD-advice-length}If $f\left(n\right)<n$, then there exists
a constant $c$ such that for all $y\left(n\right)$ we have \[
KD\left[f\left(n\right)+c,cu\log u,c\log u|y\left(n\right)\right]\nsubseteq KD\left[f\left(n\right),+\infty,+\infty|y\left(n\right)\right]\enskip.\]
\end{prop}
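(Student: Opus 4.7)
My approach is a padding construction combined with a counting argument. For each $n$ I build exponentially many length-$n$ words lying in the first (restricted) class, and I observe that the second (unrestricted) class is too narrow to absorb them all, so a suitable sequence exists.

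For the padding, I would fix a short \emph{interpreter} of constant length $c_{0}$ which, given a hardcoded string $x$ embedded in the program and a query index $i$, outputs $x\left[i\right]$ when $i<l\left(x\right)$ and $0$ otherwise. Setting $c=c_{0}+1$ and taking $l\left(x\right)=f\left(n\right)+1$, the total program length is $f\left(n\right)+c$, and the words $x0^{n-l\left(x\right)}$ with $x\in\Sigma^{f\left(n\right)+1}$ provide $2^{f\left(n\right)+1}$ distinct length-$n$ words in the first class. The crucial design point is that the interpreter does not need to know $n$: it only compares $i$ with $l\left(x\right)$, both of which are directly available. This matters when $f\left(n\right)<\log n$, since then $n$ could not even fit into a program of the allowed length. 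The runtime is linear in $u=l\left(\left\langle p,y\right\rangle \right)$ on a multi-tape machine, so a Hennie--Stearns simulation on $U$, exactly as in the Connection Lemma, yields $O\left(u\log u\right)$ time and $O\left(\log u\right)$ space, within the bounds $cu\log u$ and $c\log u$ for $c$ chosen large enough.

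For the second class, I would use the standard counting argument: for any fixed auxiliary input $y\left(n\right)$, each length-$n$ element of $KD\left[f\left(n\right),+\infty,+\infty\,|\,y\left(n\right)\right]$ is produced by some program of length at most $f\left(n\right)$, and there are only $\sum_{k=0}^{f\left(n\right)}2^{k}=2^{f\left(n\right)+1}-1$ such programs. Hence this class contains strictly fewer than $2^{f\left(n\right)+1}$ length-$n$ words. Pigeonholing against the $2^{f\left(n\right)+1}$ padded words, I pick for each $n$ a padded $x_{n}$ not in the second class, and the sequence $\left(x_{n}\right)_{n}$ witnesses the non-inclusion.

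The main obstacle is keeping the padding program within the stated length and resource bounds uniformly in $n$. Hardcoding $n$ would add $\Theta\left(\log n\right)$ bits, which is unacceptable when $f\left(n\right)$ is small; right-padding with zeros (instead of left-padding) is what sidesteps this, since the interpreter's logic is then oblivious to $n$. Everything else is mechanical thanks to the simulation setup of the Connection Lemma, and the hypothesis $f\left(n\right)<n$ enters precisely to guarantee $l\left(x\right)\leq n$, so that the padded words indeed have length exactly $n$.
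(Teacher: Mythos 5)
Your proof is correct and follows essentially the same route as the paper: hardcode a length-$\left(f\left(n\right)+1\right)$ string into a constant-size interpreter, pad with zeros on the right, and conclude by counting programs of length at most $f\left(n\right)$. The only (cosmetic) difference is that you pigeonhole directly among the $2^{f\left(n\right)+1}$ padded length-$n$ words, whereas the paper first selects a single incompressible $x$ of length $f\left(n\right)+1$ by counting and then transfers its incompressibility to $x0^{n-f\left(n\right)-1}$ via prefix-monotonicity.
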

\begin{proof}
By counting, there exists some $x$ of length $f\left(n\right)+1$
incompressible with respect to $y\left(n\right)$; by prefix-monotonicity,
$z=x0^{n-f\left(n\right)-1}\notin KD\left[f\left(n\right),+\infty,+\infty|y\left(n\right)\right]$.
Now the program which on input $\left\langle x,i,y\right\rangle $
prints $x\left[i\right]$ if $i<l\left(x\right)$ and $0$ otherwise,
has length $f\left(n\right)+O\left(1\right)$, works in time $O\left(l\left(x\right)+l\left(y\right)\right)$
and in space $O\left(\log l\left(x\right)\right)$. As above, this
program can be simulated by $U$ in time $O\left(u\log u\right)$
and in space $O\left(\log u\right)$, where $u=l\left(\left\langle x,y\right\rangle \right)$.
So there exists a constant $c$ such that $z\in KD\left[f\left(n\right)+c,cu\log u,c\log u|y\left(n\right)\right]$.
\end{proof}
Now using this proposition together with our Connection Lemma yields
the following result, which was already present in \citet{Hermo1994}
in a similar form. We denote by $REC$ the class of recursive languages.

\begin{thm}
Let $f\left(n\right)$ and $g\left(n\right)$ be two integer functions
such that $f\left(n\right)+\frac{g}{2}\left(n\right)<2^{n}$ and $g\left(n\right)$
is non-decreasing and unbounded. Then \[
DTISP\left(u\log u,\log u\right)/\left(f\left(n\right)+g\left(n\right)\right)\nsubseteq REC/f\left(n\right)\enskip.\]
\end{thm}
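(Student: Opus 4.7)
The plan is to combine Lemma~\ref{lem:Connection} with an adaptation of Proposition~\ref{fac:KD-advice-length} applied at word length $N=2^n$. Via part 2 of Lemma~\ref{lem:Connection} (taking $t(u)=u\log u$ and $s(u)=\log u$), a characteristic word sequence $(L_n)$ of length $N=2^n$ lying in $KD[f(\log N)+g(\log N),O(u\log u),O(\log u)\mid 1^{\log N}]$ yields a language in $DTISP(u\log u,\log u)/(f(n)+g(n))$; conversely, any language in $REC/f(n)$ has its characteristic word sequence in $KD[f(\log N)+O(1),+\infty,+\infty\mid 1^{\log N}]$ by the resource-free version of part 1 of the same lemma. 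Thus it suffices to exhibit, for every sufficiently large $n$, a word $z_N$ of length $N=2^n$ that lives in the upper $KD$-class but is not in the lower one, even allowing for the additive $O(1)$ slack.

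Following the argument in the proof of Proposition~\ref{fac:KD-advice-length}, I would set $\ell_n = f(n)+\lfloor g(n)/2\rfloor$, pick by a counting argument a word $x_n$ of length $\ell_n$ that is $KD$-incompressible relative to the auxiliary input $1^n$, and define $z_N = x_n 0^{N-\ell_n}$; then $L$ is the language whose characteristic word at length $n$ is $z_{2^n}$. The hypothesis $f(n)+g(n)/2 < 2^n$ is precisely what makes the padding fit inside $N$ bits, and since $g$ is unbounded and non-decreasing, for all large $n$ one has $\ell_n + O(1)\leq f(n)+g(n)$. The obvious self-delimited program that carries $x_n$ inside it and outputs $x_n[i]$ when $i<\ell_n$ and $0$ otherwise then witnesses membership of $z_N$ in the upper $KD$-class with the claimed time and space bounds, and part 2 of Lemma~\ref{lem:Connection} gives $L\in DTISP(u\log u,\log u)/(f(n)+g(n))$.

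For the non-inclusion, suppose toward a contradiction that $L\in REC/f(n)$. Then $(z_N)\in KD[f(\log N)+O(1),+\infty,+\infty\mid 1^{\log N}]$, and by the prefix-monotonicity of $KD$ emphasized in Section~\ref{sub:Framework}, the same description compresses the prefix $x_n$. But $x_n$ was chosen incompressible at length $\ell_n = f(n)+\lfloor g(n)/2\rfloor$, which for all sufficiently large $n$ strictly exceeds $f(n)+O(1)$ because $g$ is unbounded and non-decreasing, a contradiction. Finitely many small $n$ for which the inequalities fail can be absorbed by hard-coding $L_n$ arbitrarily into the advice.

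The main obstacle I anticipate is the bookkeeping of additive constants on both sides: the $O(1)$ overhead coming from the universal-machine simulation in part 1 of Lemma~\ref{lem:Connection} must eventually fit strictly below the gap $\lfloor g(n)/2\rfloor$, while simultaneously $\ell_n + O(1)$ must fit within $f(n)+g(n)$ on the upper-bound side. This is precisely why the hypothesis is phrased with $g/2$ rather than $g$: it leaves room for the simulation constants on both sides and guarantees that the separation survives once $g$ has grown past them.
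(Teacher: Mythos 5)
Your proposal is correct and follows essentially the same route as the paper: the paper invokes Proposition~\ref{fac:KD-advice-length} at length $N=2^{n}$ with $f+\frac{g}{2}$ playing the role of the proposition's $f$, whereas you inline that proposition's proof (counting, zero-padding, prefix-monotonicity), but the construction, the two uses of the Connection Lemma, and the absorption of the additive constants via the unboundedness of $g$ are identical.
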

\begin{proof}
By proposition \ref{fac:KD-advice-length} and the fact that $g\left(n\right)$
is non-decreasing and unbounded, we consider a sequence of words $L_{n}$
of length $N=2^{n}$ belonging to $KD\left[\left(f+g\right)\left(\log N\right),cu\log u,c\log u|1^{\log N}\right]\backslash KD\left[\left(f+\frac{g}{2}\right)\left(\log N\right),+\infty,+\infty|1^{\log N}\right]$
for some $c$ and for $n$ large enough. Now by the above Connection
Lemma, the corresponding language $L$ is in $DTISP\left(u\log u,\log u\right)/\left(f\left(n\right)+g\left(n\right)\right)$.
Suppose by contradiction that $L\in REC/f\left(n\right)$. By the
Connection Lemma, there exists a constant $c'$ such that $\left(L_{n}\right)$
is in $KD\left[f\left(\log N\right)+c',+\infty,+\infty|1^{\log N}\right]$.
For $n$ large enough, $g\left(n\right)>2c'$ and then $\left(L_{n}\right)$
is in $KD\left[\left(f+\frac{g}{2}\right)\left(\log N\right),+\infty,+\infty|1^{\log N}\right]$,
a contradiction.
\end{proof}

\subsubsection{Sensitivity to Time and Space.}

We give a proposition analogous to theorems 4.3 and 4.4 of Longpré's
thesis \citet{Longpre1986}. The main difference here is that we consider
that resources are bounded with respect to the input rather than the
output.
\begin{prop}
\label{fac:Time-Space}Let $f\left(n\right)$, $t\left(u\right)$
and $s\left(u\right)$ be integer non-decreasing and constructible
functions such that $f\left(n\right)<n$, $t\left(u\right)\geq u$
and $s\left(u\right)\geq\log u$. Then there exists some constant
$c$ such that if $t'\left(u\right)\geq c2^{f\left(2^{u}\right)}f\left(2^{u}\right)t\left(2f\left(2^{u}\right)+u\right)\left(f\left(2^{u}\right)+\log t\left(2f\left(2^{u}\right)+u\right)\right)$
and $s'\left(u\right)\geq c\left(2^{f\left(2^{u}\right)}f\left(2^{u}\right)+s\left(2f\left(2^{u}\right)+u\right)\right)$,
for $n$ large enough, \[
KD\left[c,t'\left(u\right),s'\left(u\right)|n-1\right]\nsubseteq KD\left[f\left(n\right),t\left(u\right),s\left(u\right)|n-1\right]\enskip.\]
\end{prop}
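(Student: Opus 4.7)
The plan is to produce, for each sufficiently large $n$, a word $x_n$ of length $n$ by diagonalizing against every program of length at most $f(n)$ that respects the resources $t(u),s(u)$, and to pick $x_n$ so that the diagonalization itself can be carried out by a constant-size program with the auxiliary input $n-1$. The starting observation is a counting one: the number of programs of length at most $f(n)$ is $\sum_{k=0}^{f(n)}2^k=2^{f(n)+1}-1$, strictly less than the number $2^{f(n)+1}$ of binary words of length $f(n)+1$. Consequently, looking only at the first $f(n)+1$ bits produced (on inputs $i=0,1,\dots,f(n)$, with the free word $y(n)=n-1$), there exists a word $w^{\ast}\in\{0,1\}^{f(n)+1}$ that is \emph{not} produced by any such program, regardless of whether it respects the resource bounds. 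I would define $x_n:=w^{\ast}0^{n-f(n)-1}$, choosing $w^{\ast}$ to be the lex-smallest such missing word.

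For the lower bound $\left(x_n\right)\notin KD[f(n),t(u),s(u)|n-1]$: if some program $p$ of length at most $f(n)$ witnessed membership of $(x_n)$ in the smaller class, it would in particular respect the bounds on the first $f(n)+1$ positions, so its output there would belong to the set $S$ of realized patterns. But $x_n[0..f(n)]=w^{\ast}\notin S$ by construction, contradicting $U(\langle p,i,n-1\rangle)=x_n[i]$ for $0\le i\le f(n)$.

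For the upper bound $\left(x_n\right)\in KD[c,t'(u),s'(u)|n-1]$: I would exhibit a constant-size program $p_0$ that, on input $\langle p_0,i,n-1\rangle$, first recovers $n$ (so $u=l(\langle p_0,n-1\rangle)=\log n+O(1)$ and $2^u=n$), outputs $0$ if $i>f(n)$, and otherwise enumerates all $2^{f(n)+1}-1$ programs $p$ of length at most $f(n)$, simulates $U$ on $\langle p,j,n-1\rangle$ for $j=0,\dots,f(n)$ under the bounds $t(u_p),s(u_p)$ with $u_p\le 2f(n)+\log n+O(1)=2f(2^u)+u+O(1)$, records the successfully produced length-$(f(n)+1)$ strings in a table $S$, finds the lex-smallest element of $\{0,1\}^{f(n)+1}\setminus S$, and returns its $i$-th bit. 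The table $S$ is of total size $O(2^{f(n)}f(n))$, accounting for the $2^{f(2^u)}f(2^u)$ term in $s'$; the outer loop performs $O(2^{f(n)}f(n))$ simulations, each of which costs time $O(t(u_p))$ on the simulated machine, giving the factor $2^{f(2^u)}f(2^u)\,t(2f(2^u)+u)$ in $t'$; the remaining $(f(2^u)+\log t(\cdots))$ factor collects the overhead of executing this on the fixed universal machine $U$ (Hennie--Stearns multitape-to-two-tape simulation contributing the $\log t$ factor, plus an $f(2^u)$ factor from writing/comparing strings of length $f(n)+1$ in the inner loop and from evaluating $u_p$).

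The main obstacle is not the combinatorial heart of the argument, which is essentially a one-line pigeonhole, but the bookkeeping needed to match the stated resource bounds exactly. In particular one has to be careful that simulations on $U$ are aborted when they try to exceed $t(u_p)$ or $s(u_p)$ (so that divergent $p$'s do not block the enumeration), that the $u_p$ fed to the simulator is the correct input length $l(\langle p,n-1\rangle)$ rather than a different quantity, and that the auxiliary $n-1$ is long enough to support the $\log n$ space needed to index positions and programs; the monotonicity assumptions on $t,s$ and the lower bounds $t(u)\ge u$, $s(u)\ge\log u$ are precisely what allow all these quantities to be absorbed into the stated $t'(u)$ and $s'(u)$.
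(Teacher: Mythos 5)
Your proof is correct and follows essentially the same route as the paper: a pigeonhole count over the at most $2^{f(n)+1}-1$ programs of length at most $f(n)$, padding the missing length-$(f(n)+1)$ pattern with zeros (the paper's prefix-monotonicity), and a constant-size exhaustive-search program that tabulates the realized patterns, with the same time/space accounting (the paper likewise takes the store-everything side of the time/space trade-off). One small consistency fix: define $w^{\ast}$ as the lex-smallest pattern not realized by any program of length at most $f(n)$ \emph{within the bounds} $t(u),s(u)$ --- your opening paragraph defines it without the resource restriction, whereas your diagonalizing program computes the lex-minimum of the complement of the resource-bounded set $S$, so the two could name different words; with the resource-bounded definition used throughout, the counting, the lower-bound argument and the algorithm all agree.
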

\begin{proof}
By counting, there must exist some $x$ of length $f\left(n\right)+1$
incompressible with respect to $n-1$ in time $t\left(u\right)$ and
in space $s\left(u\right)$; by prefix-monotonicity, $z=x0^{n-f\left(n\right)-1}\notin KD\left[f\left(n\right),t\left(u\right),s\left(u\right)|n-1\right]$.
Finding the smallest such $x$ in the lexicographic order can be performed
by exhaustive search by running all programs of length at most $f\left(n\right)$
in time $t\left(u\right)$ and in space $s\left(u\right)$. Here we
face a trade-off between time and space: if we choose to store all
generated strings to avoid recomputations, this increases the required
space; otherwise we may iterate the exhaustive search for each string
of length $f\left(n\right)+1$ until we find the desired one. The
first option takes:
\begin{itemize}
\item an overall advice of length $O\left(1\right)$ (since $n-1$ is given
for free);
\item an overall time of $O\left(2^{f\left(n\right)}f\left(n\right)t\left(2f\left(n\right)+\log n+2\right)\right)$,
since $l\left(\left\langle a,b\right\rangle \right)=2l\left(a\right)+l\left(b\right)+1$
and $l\left(n-1\right)\leq\log n+1$; now $u=l\left(\left\langle p,n-1\right\rangle \right)>\log n+1$.
So the time bound is $O\left(2^{f\left(2^{u}\right)}f\left(2^{u}\right)t\left(2f\left(2^{u}\right)+u\right)\right)$.
Again there is an extra logarithmic factor due to the simulation by
our fixed machine $U$;
\item an overall space of $O\left(2^{f\left(n\right)}f\left(n\right)+s\left(2f\left(n\right)+\log n+2\right)\right)$,
i.e. $O\left(2^{f\left(2^{u}\right)}f\left(2^{u}\right)+s\left(2f\left(2^{u}\right)+u\right)\right)$.
\end{itemize}
\end{proof}
Now using this proposition together with our Connection Lemma yields
the following result.
\begin{thm}
Let $f\left(n\right)$, $g\left(n\right)$, $t\left(u\right)$ and
$s\left(u\right)$ be integer non-decreasing and constructible functions
such that $f\left(n\right)+g\left(n\right)<2^{n}$, $g$ is unbounded,
$t\left(u\right)\geq u$ and $s\left(u\right)\geq\log u$. Let $t''\left(u\right)$
and $s''\left(u\right)$ be such that $t''\left(u\right)=\omega\left(t\left(u\right)\log t\left(u\right)\right)$
and $s''\left(u\right)=\omega\left(s\left(u\right)\right)$. Then
there exists some constant $c$ such that if $t'\left(u\right)\geq c2^{\left(f+g\right)\left(u\right)}\left(f+g\right)\left(u\right)t''\left(2\left(f+g\right)\left(u\right)+u\right)\left(\left(f+g\right)\left(u\right)+\log t''\left(2\left(f+g\right)\left(u\right)+u\right)\right)$
and $s'\left(u\right)\geq c\left(2^{\left(f+g\right)\left(u\right)}\left(f+g\right)\left(u\right)+s''\left(2\left(f+g\right)\left(u\right)+u\right)\right)$,
then \[
DTISP\left(t'\left(u\right),s'\left(u\right)\right)/c\nsubseteq DTISP\left(t\left(u\right),s\left(u\right)\right)/f\left(n\right)\enskip.\]
\end{thm}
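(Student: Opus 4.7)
The plan is to combine Proposition~\ref{fac:Time-Space}, applied to characteristic word sequences, with the Connection Lemma~\ref{lem:Connection}, exactly as in the advice-length hierarchy theorem above.

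The first step is to apply Proposition~\ref{fac:Time-Space} to words of length $N=2^n$, i.e.\ with its ``$n$'' replaced by $N$. I would instantiate its ``$f(n)$'' as $(f+g)(\log N)$, its ``$t(u)$'' as $t''(u)$, and its ``$s(u)$'' as $s''(u)$. The hypothesis $(f+g)(\log N) < N$ is exactly $(f+g)(n) < 2^n$; the auxiliary input ``$n-1$'' in the proposition is, as a binary string, the word $1^n = 1^{\log N}$, which is precisely the advice appearing in the Connection Lemma. Under the substitution $m = 2^u$ one has $(f+g)(\log m) = (f+g)(u)$, so the proposition's lower bounds on $t'(u)$ and $s'(u)$ become exactly the expressions stated in the theorem. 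The proposition therefore yields a constant $c$ and a sequence $(L_n)$ of words of length $N$ with
\[
(L_n) \in KD[c, t'(u), s'(u)|1^{\log N}] \setminus KD[(f+g)(\log N), t''(u), s''(u)|1^{\log N}]
\]
for all sufficiently large $n$.

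The second step is to let $L$ be the language whose length-$n$ characteristic word is $L_n$. The containment half above, together with part~2 of the Connection Lemma, gives $L \in DTISP(t'(u), s'(u))/c$. The third step is the contradiction: assuming $L \in DTISP(t(u), s(u))/f(n)$, part~1 of the Connection Lemma delivers
\[
(L_n) \in KD[f(\log N)+O(1),\; O(t(u)\log t(u)),\; O(s(u))\,|\,1^{\log N}].
\]
Since $g$ is non-decreasing and unbounded, for $n$ large $g(n)$ dominates the $O(1)$ gap, so $f(\log N)+O(1) \leq (f+g)(\log N)$; since $t''(u) = \omega(t(u)\log t(u))$ and $s''(u) = \omega(s(u))$, the multiplicative constants from the Connection Lemma are absorbed for $u$ large. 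This places $(L_n)$ inside $KD[(f+g)(\log N), t''(u), s''(u)|1^{\log N}]$ at all large $n$, contradicting the non-containment established in the first step.

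The main obstacle is the triple bookkeeping of $n$, $N = 2^n$, and $u = l(\langle p, 1^n\rangle)$: one must verify that the substitution $f_{\mathrm{prop}}(m) = (f+g)(\log m)$ really converts the proposition's $2^{f_{\mathrm{prop}}(2^u)} f_{\mathrm{prop}}(2^u)\cdots$ bound into the theorem's $2^{(f+g)(u)}(f+g)(u)\cdots$ bound, and that a single ``$n$ large enough'' threshold simultaneously handles the absorption of the additive $O(1)$ by $g$ and the absorption of the multiplicative constants by the little-$\omega$ in $t''$ and $s''$. Once this correspondence is in place, no new idea beyond the one already used in the advice-length hierarchy theorem is required.
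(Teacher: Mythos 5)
Your proposal is correct and follows essentially the same route as the paper: instantiate Proposition~\ref{fac:Time-Space} at length $N=2^{n}$ with $(f+g)\circ\log$, $t''$, $s''$ in place of its $f$, $t$, $s$ (noting $N-1=1^{\log N}$), pass the resulting sequence $(L_n)$ through both directions of the Connection Lemma, and let the unboundedness of $g$ and the little-$\omega$ gaps absorb the additive and multiplicative constants. Your explicit bookkeeping of the substitution $f_{\mathrm{prop}}(2^{u})=(f+g)(u)$ is merely a more careful spelling-out of what the paper's proof leaves implicit.
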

\begin{proof}
By proposition \ref{fac:Time-Space}, we consider a sequence of words
$L_{n}$ of length $N=2^{n}$ belonging to $KD\left[c,t'\left(u\right),s'\left(u\right)|1^{\log N}\right]\backslash KD\left[\left(f+g\right)\left(\log N\right),t''\left(u\right),s''\left(u\right)|1^{\log N}\right]$
for some $c$ and for $n$ large enough. Now by the Connection Lemma,
the corresponding language $L$ is in $DTISP\left(t'\left(u\right),s'\left(u\right)\right)/c$.
Suppose by contradiction that $L\in DTISP\left(t\left(u\right),s\left(u\right)\right)/f\left(n\right)$.
By the Connection Lemma, there exists a constant $c'$ such that $\left(L_{n}\right)$
is in $KD\left[f\left(\log N\right)+c',c't\left(u\right)\log t\left(u\right),c's\left(u\right)|1^{\log N}\right]$.
For $n$ large enough, $\left(f+g\right)\left(\log N\right)\geq f\left(\log N\right)+c'$,
$t''\left(u\right)\geq c't\left(u\right)\log t\left(u\right)$ and
$s''\left(u\right)\geq c's\left(u\right)$, so $\left(L_{n}\right)$
is in $KD\left[\left(f+g\right)\left(\log N\right),t''\left(u\right),s''\left(u\right)|1^{\log N}\right]$,
a contradiction.
\end{proof}

\subsection{Immunity and Natural Proofs}

\subsubsection{Immunity of Kolmogorov Sets.}

Immunity is an indication that a language is algorithmically very
complex, in the sense that given a complexity class $\mathcal{C}$,
a language $L$ is called $\mathcal{C}$-\emph{immune} iff $L$ is
infinite and does not have any infinite subset belonging to $\mathcal{C}$.
To this we add the notion of density: a language $L$ has \emph{partial
density} $\delta$ iff there exist infinitely many $n$'s such that
$L$ contains at least $\delta\left(n\right)2^{n}$ words of length
$n$. Thus we generalize the notion of immunity using density: we
say that a language $L$ is \emph{$\mathcal{C}$-immune for partial
density $\delta$} iff $L$ is infinite and does not have any infinite
subset of partial density at least $\delta$ belonging to $\mathcal{C}$.

As noted by Hartmanis \citet{Hartmanis1983} and further developed
by Longpré \citet{Longpre1986}, the complements of Kolmogorov sets
are immune. Longpré's results for immunity (theorems 3.7, 3.8 and
4.13 of \citet{Longpre1986}) concern classical complexity classes
and global density. Here we deal with advice complexity classes and
partial density, as follows.
\begin{prop}
\label{prop:Immunity}Let $f\left(n\right)$, $g\left(n\right)$,
$t\left(u\right)$ and $s\left(u\right)$ be integer non-decreasing
and constructible functions, such that $f\left(n\right)<n$, $g\left(n\right)$
is unbounded and $g\left(n\right)<f\left(n\right)$. Let $\delta\left(n\right)$
be a function to the real interval $\left[0,1\right]$ and $\rho\left(n\right)=\left(1-\delta\left(n\right)\right)2^{n}+1$.
If $t'\left(u\right)\geq u$ and $s'\left(u\right)\geq u$ are non-decreasing,
$t'\left(u\right)=o\left(\frac{t\left(\log u\right)}{\rho\left(u\right)\log\left(\rho\left(u\right)t\left(\log u\right)\right)}\right)$
and $s'\left(u\right)=o\left(s\left(\log u\right)\right)$, then

$\Sigma^{*}\backslash\bigcup_{n\in\boldsymbol{N}}KD\left[f\left(n\right),t\left(u\right),s\left(u\right)|n-1\right]$
is $DTISP\left(t'\left(u\right),s'\left(u\right)\right)/\left(f-g\right)\left(n\right)$-immune
for partial density $\delta$.\end{prop}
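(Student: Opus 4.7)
The plan is to argue by contradiction in the spirit of Longpr\'e's immunity proofs, using the advice for the candidate infinite subset in place of a direct enumeration of a complexity class. Suppose, for contradiction, that $L'\subseteq\Sigma^{*}\backslash\bigcup_{n}KD\left[f\left(n\right),t\left(u\right),s\left(u\right)|n-1\right]$ is infinite, has partial density at least $\delta$, and is decided by some $M'\in DTISP\left(t'\left(u\right),s'\left(u\right)\right)$ using advice $\left(w_{n}\right)$ with $l\left(w_{n}\right)\leq\left(f-g\right)\left(n\right)$. The density hypothesis supplies infinitely many $n$ with $|L'\cap\Sigma^{n}|\geq\delta\left(n\right)2^{n}$, so fewer than $\rho\left(n\right)$ strings of length $n$ lie outside $L'$; enumerating $\Sigma^{n}$ in lexicographic order, the first $\rho\left(n\right)$ strings therefore contain some $z_{n}\in L'$, which will be the would-be incompressible witness I aim to compress.

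From $w_{n}$ I build a KD program $p$ consisting of $w_{n}$ together with a fixed prefix encoding $M'$ and the enumeration loop, so that $l\left(p\right)\leq\left(f-g\right)\left(n\right)+O\left(1\right)$; since $g\left(n\right)$ is unbounded, $l\left(p\right)\leq f\left(n\right)$ for all $n$ large enough. On input $\left\langle p,i,n-1\right\rangle$ the program recovers $n$, enumerates $\Sigma^{n}$ lexicographically and, for each candidate $y$, simulates $M'$ on $\left\langle w_{n},y\right\rangle$; it halts at the first accepted $y$ and outputs $y\left[i\right]$, which is $z_{n}\left[i\right]$ by construction. If this program additionally meets the time bound $t\left(u\right)$ and space bound $s\left(u\right)$, then $z_{n}\in KD\left[f\left(n\right),t\left(u\right),s\left(u\right)|n-1\right]\cap L'$ for infinitely many $n$, contradicting the disjointness assumption.

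What remains is the resource accounting. Writing $m=l\left(\left\langle w_{n},y\right\rangle\right)=\Theta\left(n\right)$ for the input size presented to $M'$ and $u=l\left(\left\langle p,n-1\right\rangle\right)$ for the KD input size, one simulation of $M'$ costs $O\left(t'\left(m\right)\right)$ time and $O\left(s'\left(m\right)\right)$ space; the outer loop performs at most $\rho\left(n\right)$ such simulations while reusing the work tape, totalling $O\left(\rho\left(n\right)t'\left(m\right)\right)$ time and $O\left(s'\left(m\right)\right)$ space, to which a logarithmic factor in time must be added for the Hennie--Stearns simulation by the universal machine $U$, exactly as in the Connection Lemma. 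Since $u\geq\log n$ and $m=\Theta\left(n\right)$, the hypotheses $s'\left(u\right)=o\left(s\left(\log u\right)\right)$ and $t'\left(u\right)=o\left(t\left(\log u\right)/\rho\left(u\right)\log\left(\rho\left(u\right)t\left(\log u\right)\right)\right)$ specialised at $u=m$ are tailored to deliver bounds of $t\left(u\right)$ and $s\left(u\right)$ for $n$ large enough. The main obstacle I anticipate is precisely this scale-matching between the physical input $m\approx n$ of the advice machine and the much smaller KD input $u\approx\log n+\left(f-g\right)\left(n\right)$, aggravated by the search factor $\rho\left(n\right)$ that can reach almost $2^{n}$; verifying that the stated $o\left(\cdot\right)$ hypotheses absorb this factor together with the logarithmic simulation overhead is the only genuine calculation, the structural skeleton being otherwise the compression-from-enumeration argument already used in Propositions~\ref{fac:KD-advice-length} and~\ref{fac:Time-Space}, recast for partial density $\delta$.
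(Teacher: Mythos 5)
Your proposal is correct and follows essentially the same route as the paper: at a length $n$ where the density hypothesis holds, the lexicographically first element of the candidate subset is found within $\rho\left(n\right)$ simulations of the advice machine, so the advice itself (plus $O\left(1\right)$) serves as a $KD$-program of length $\left(f-g\right)\left(n\right)+O\left(1\right)\leq f\left(n\right)$, contradicting disjointness from the Loveland set. The one step you defer --- checking that the $o\left(\cdot\right)$ hypotheses absorb the factor $\rho$ and the Hennie--Stearns overhead --- is exactly the computation the paper carries out by plugging the simulation input length $v\leq2^{u}$ into $t'$ and $s'$ (so that $t\left(\log\cdot\right)$ and $s\left(\log\cdot\right)$ become $t\left(u\right)$ and $s\left(u\right)$), and your setup with $m=\Theta\left(n\right)$ versus $u\approx\log n+2\left(f-g\right)\left(n\right)$ is the same scale-matching.
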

\begin{proof}
Let us consider any infinite language $A\in DTISP\left(t'\left(u\right),s'\left(u\right)\right)/\left(f-g\right)\left(n\right)$
with partial density $\delta$. We argue that for $n$ large enough,
the lexicographically smallest word of length $n$ belonging to $A$
is in $KD\left[f\left(n\right),t\left(u\right),s\left(u\right)|n-1\right]$.
Indeed there exists a Turing machine $M$ working in time $t'\left(u\right)$
and in space $s'\left(u\right)$, and a sequence of advice $\left(w_{n}\right)$
of length $l\left(w_{n}\right)\leq\left(f-g\right)\left(n\right)$
such that for all $x\in\Sigma^{n}$, $x\in A\iff M\left\langle w_{n},x\right\rangle =1$.
Thus it suffices to simulate $M$ with advice $w_{n}$ on all $x$'s
of length $n$ in the lexicographic order.
\begin{itemize}
\item This can be done by a program of length $\left(f-g\right)\left(n\right)+O\left(1\right)$
(since $n-1$ is given for free).
\item For an $n$ such that $A$ contains at least $\delta\left(n\right)2^{n}$
words of length $n$, there are at most $\left(1-\delta\left(n\right)\right)2^{n}+1=\rho\left(n\right)$
steps of simulation, each step requiring a time $t'\left(v\right)$
where $v=l\left(\left\langle w_{n},1^{n}\right\rangle \right)\leq2l\left(w_{n}\right)+n+1\leq2^{2l\left(w_{n}\right)+\log\left(n-1\right)+1}=2^{l\left(\left\langle w_{n},n-1\right\rangle \right)}=2^{u}$.
The simulation by our universal Turing machine $U$ can thus be performed
in time $O\left(\left(\rho t'\log\left(\rho t'\right)\right)\left(2^{u}\right)\right)=o\left(\frac{t\left(u\right)}{\log\left(\rho\left(2^{u}\right)t\left(u\right)\right)}\log\frac{t\left(u\right)}{\log\left(\rho\left(2^{u}\right)t\left(u\right)\right)}\right)$.
Since $\frac{t\left(u\right)}{\log\left(\rho\left(2^{u}\right)t\left(u\right)\right)}\leq t\left(u\right)\leq\rho\left(2^{u}\right)t\left(u\right)$,
this is $o\left(t\left(u\right)\right)$.
\item The above simulation can be performed in space $O\left(v+s'\left(v\right)\right)=O\left(s'\left(2^{u}\right)\right)=o\left(s\left(u\right)\right)$.
\end{itemize}
\end{proof}

\subsubsection{Non-Existence of Natural Proofs among Advice Complexity Classes.}

We first recall the definitions of \citet{Razborov1997} (section
2.2). A \emph{combinatorial property} is a set of boolean functions.
Each of the $2^{2^{n}}$ boolean functions on $n$-bit inputs can
be described by a binary word of length $2^{n}$ (which in turn can
be seen as the characteristic word of a language, see section \ref{sub:Framework}
above). Thus a combinatorial property can be seen as a language with
words of length powers of $2$. The question whether a given boolean
function belongs to a combinatorial property is an algorithmic problem
which requires some time and space depending on the length $2^{n}$
of the boolean functions. Thus it is possible to group together combinatorial
properties with respect to this algorithmic complexity, and such sets
of combinatorial properties are some kinds of \emph{complexity classes}.
These complexity classes should not be confused with the complexity
classes of the boolean functions themselves, i.e. the time and space
(depending on $n$) required to compute the boolean functions on $n$-bit
inputs!

Given a complexity class $\mathcal{C}$, a combinatorial property
$\Gamma$ is called \emph{$\mathcal{C}$-natural for partial density
$\delta$ }iff there exists $\Xi\subseteq\Gamma$ such that:
\begin{description}
\item [{constructibility:}] $\Xi\in\mathcal{C}$
\item [{largeness:}] $\Xi$ is infinite and has partial density $\delta\circ\log$
(since the words in $\Xi$ have lengths of the form $2^{n}$)
\end{description}
In fact the density considered in \citet{Razborov1997} is global,
but we refine it to use partial density. So, what does a non-natural
property look like? It is a property without any large constructible
sub-property. This looks very much like the aforementioned notion
of immunity. Indeed:
\begin{rem}
\label{rem:non-natural=00003Dimmune}Let $\Gamma$ be a combinatorial
property, $\mathcal{C}$ a complexity class and $\delta:\boldsymbol{N}\to\left[0,1\right]$.
Then $\Gamma$ is not $\mathcal{C}$-natural for partial density $\delta$
iff $\Gamma$ is $\mathcal{C}$-immune for partial density $\delta\circ\log$.
\end{rem}
Now what is the use of a combinatorial property? Given a complexity
class $\mathcal{D}$, an infinite combinatorial property $\Gamma$
is called \emph{useful} against $\mathcal{D}$ iff
\begin{description}
\item [{usefulness:}] given a sequence $\left(L_{n}\right)$ of characteristic
words, if $L_{n}\in\Gamma$ infinitely often then $L\notin\mathcal{D}$.
\end{description}
Why is it called {}``usefulness''? Because in order to prove that
$L\notin\mathcal{D}$, it is enough to prove that $L_{n}\in\Gamma$
infinitely often. So Razborov \& Rudich manage to prove that for various
circuit complexity classes $\mathcal{C}$ and $\mathcal{D}$ there
are no $\mathcal{C}$-natural properties against $\mathcal{D}$. Using
proposition \ref{prop:Immunity} together with our Connection lemma,
we prove the following result:
\begin{thm}
Let $f\left(n\right)$, $g\left(n\right)$, $t\left(u\right)$ and
$s\left(u\right)$ be integer non-decreasing and constructible functions,
such that $f\left(n\right)<2^{n}$, $g$ is unbounded and $g\left(n\right)<f\left(n\right)$.
Let $\delta\left(n\right)$ be a function to the real interval $\left[0,1\right]$
and $\rho\left(n\right)=\left(1-\delta\left(\log n\right)\right)2^{n}+1$.
If $t'\left(u\right)\geq u$ and $s'\left(u\right)\geq u$ are non-decreasing,
$t'\left(u\right)=o\left(\frac{t\left(\log u\right)}{\rho\left(u\right)\log\left(\rho\left(u\right)t\left(\log u\right)\right)}\right)$
and $s'\left(u\right)=o\left(s\left(\log u\right)\right)$, then\end{thm}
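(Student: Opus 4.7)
The plan is to combine Proposition~\ref{prop:Immunity} (immunity of the complement of a Loveland set), Remark~\ref{rem:non-natural=00003Dimmune} (immunity equals non-naturality with a $\log$-shifted density), and the Connection Lemma~\ref{lem:Connection} (translation between Loveland sets and advice classes). The witness will be the complement of the Loveland set, with its parameter functions precomposed by $\log$.

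First I would take
\[
\Gamma \;=\; \Sigma^{*}\setminus\bigcup_{m\in\boldsymbol{N}}KD[f(\log m),t(u),s(u)|m-1]
\]
and apply Proposition~\ref{prop:Immunity} after the substitutions $f\mapsto f\circ\log$, $g\mapsto g\circ\log$, and $\delta\mapsto\delta\circ\log$, with $t,s$ unchanged. Under these substitutions the proposition's hypothesis $f(n)<n$ becomes $f(\log n)<n$, i.e.\ the theorem's $f(n)<2^{n}$; the requirements that $g$ is unbounded and $g<f$ are preserved; the proposition's $\rho$ becomes exactly the theorem's $\rho(n)=(1-\delta(\log n))2^{n}+1$; and the $o(\cdot)$ conditions on $t'(u)$ and $s'(u)$ are identical on the two sides. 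The proposition then yields that $\Gamma$ is $DTISP(t'(u),s'(u))/(f-g)(\log n)$-immune for partial density $\delta\circ\log$.

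Second I would read $\Gamma$ as a combinatorial property by restricting it to characteristic words, which for boolean functions on $n$ variables have length $N=2^{n}$, and invoke Remark~\ref{rem:non-natural=00003Dimmune}. Immunity on words of length $N$ at density $\delta\circ\log$ translates precisely into non-naturality on boolean functions of $n=\log N$ variables at density $\delta$; so $\Gamma$ is not $DTISP(t'(u),s'(u))/(f-g)(n)$-natural for partial density $\delta$.

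Third I would establish usefulness through Lemma~\ref{lem:Connection}. The key small observation is that for $N=2^{n}$ the binary word $N-1$ is exactly $1^{n}=1^{\log N}$, so the advice hard-wired into the definition of $\Gamma$ coincides with the advice used by the Connection Lemma. Consequently, whenever a language $L$ lies in an advice class $DTISP(\tau(u),\sigma(u))/\phi(n)$ whose parameters satisfy $\phi(n)+O(1)\leq f(n)$, $\tau(u)\log\tau(u)=O(t(u))$, and $\sigma(u)=O(s(u))$, part~1 of Lemma~\ref{lem:Connection} forces $(L_{n})\in KD[f(\log N),t(u),s(u)|N-1]$ for every $n$, so $L_{n}\notin\Gamma$ for every large~$n$; contrapositively, $\Gamma$ is useful against every such class.

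The hard part is purely bookkeeping: aligning $n$ versus $\log n$ and $N$ versus $\log N$ across the Proposition, the Remark, and both directions of the Connection Lemma, and checking that the $o(\cdot)$-gaps in the hypotheses comfortably swallow the $O(1)$ advice overhead and the Hennie--Stearns $O(\log\tau)$ simulation factor, so that the class witnessing usefulness is genuinely larger than the non-naturality class. Once these substitutions are in place, every step is a direct invocation of an already-proved result.
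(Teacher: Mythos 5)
Your argument has the quantifiers backwards, and this is a genuine gap rather than a bookkeeping issue. The theorem is universally quantified over combinatorial properties: \emph{no} $DTISP\left(t'\left(u\right),s'\left(u\right)\right)/\left(f-g\right)\left(\log n\right)$-natural property for partial density $\delta$ is useful against $DTISP\left(t\left(u\right),s\left(u\right)\right)/f\left(n\right)$. What you prove is an existential statement: the particular property $\Gamma_{0}=\Sigma^{*}\backslash\bigcup_{N}KD\left[f\left(\log N\right),t\left(u\right),s\left(u\right)|1^{\log N}\right]$ is useful but not natural. Exhibiting one useful, non-natural property does not exclude that some \emph{other} property could be both natural and useful, which is exactly what the theorem denies. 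To close the gap you must start from an \emph{arbitrary} natural property $\Gamma$, take the infinite, partially dense subset $\Xi\in DTISP\left(t'\left(u\right),s'\left(u\right)\right)/\left(f-g\right)\left(\log N\right)$ guaranteed by naturality, and argue that $\Xi$ must meet $\bigcup_{N}KD\left[f\left(\log N\right),t\left(u\right),s\left(u\right)|1^{\log N}\right]$ infinitely often --- for otherwise $\Xi$ minus a finite set would be an infinite, partially dense, constructible subset of the complement, contradicting Proposition~\ref{prop:Immunity}. Picking $L_{n}\in\Xi\cap KD\left[\cdots\right]$ for infinitely many $n$ and applying part~2 of Lemma~\ref{lem:Connection} (the $KD\Rightarrow DTISP$ direction, which is lossless) then yields a language $L\in DTISP\left(t\left(u\right),s\left(u\right)\right)/f\left(n\right)$ with $L_{n}\in\Gamma$ infinitely often, so $\Gamma$ is not useful. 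Note that this is the opposite direction of the Connection Lemma from the one you invoke.

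A secondary point: even your usefulness claim for $\Gamma_{0}$ is slightly off target, because part~1 of Lemma~\ref{lem:Connection} (the direction you use) loses an additive constant in program length and a Hennie--Stearns logarithmic factor in time, so it only places $\left(L_{n}\right)$ in $KD\left[f\left(\log N\right)+O\left(1\right),O\left(t\log t\right),O\left(s\right)|1^{\log N}\right]$ rather than in $KD\left[f\left(\log N\right),t\left(u\right),s\left(u\right)|1^{\log N}\right]$; hence $\Gamma_{0}$ is only shown useful against classes strictly smaller than $DTISP\left(t\left(u\right),s\left(u\right)\right)/f\left(n\right)$. The paper's proof never needs to establish that any property is useful, so it sidesteps this loss entirely.
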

\begin{quotation}
\noindent there is no $DTISP\left(t'\left(u\right),s'\left(u\right)\right)/\left(f-g\right)\left(\log n\right)$-natural
property for partial density $\delta$ useful against $DTISP\left(t\left(u\right),s\left(u\right)\right)/f\left(n\right)$.\end{quotation}
\begin{proof}
Let $\Gamma$ be a $DTISP\left(t'\left(u\right),s'\left(u\right)\right)/\left(f-g\right)\left(\log n\right)$-natural
property for partial density $\delta$. It is important to notice
that all words in $\Gamma$ have lengths of the form $N=2^{n}$. There
exists $\Xi\subseteq\Gamma$ such that $\Xi\in DTISP\left(t'\left(u\right),s'\left(u\right)\right)/\left(f-g\right)\left(\log N\right)$
and $\Xi$ has partial density $\delta\circ\log$. Suppose by contradiction
that $\Delta=\Xi\cap\left(\bigcup_{N\in\left\{ 2^{n}\right\} _{n\in\boldsymbol{N}}}KD\left[f\left(\log N\right),t\left(u\right),s\left(u\right)|1^{\log N}\right]\right)$
is finite. Then $\Xi\backslash\bigcup_{N\in\left\{ 2^{n}\right\} _{n\in\boldsymbol{N}}}KD\left[f\left(\log N\right),t\left(u\right),s\left(u\right)|1^{\log N}\right]$
is still infinite with partial density $\delta\circ\log$ and is still
in $DTISP\left(t'\left(u\right),s'\left(u\right)\right)/\left(f-g\right)\left(\log N\right)$,
which contradicts proposition \ref{prop:Immunity}. Thus $\Delta$
is infinite, and there are infinitely many $n$'s such that we can
pick in this set an element $L_{n}$ of length $N=2^{n}$. Now by
the Connection Lemma, the language $L$ having these $L_{n}$'s as
characteristic words is in $DTISP\left(t\left(u\right),s\left(u\right)\right)/f\left(n\right)$.
Thus $\Gamma$ is not useful against $DTISP\left(t\left(u\right),s\left(u\right)\right)/f\left(n\right)$.
\end{proof}

\section{Conclusion}

We had to take resource bounds on Kolmogorov-Loveland complexity with
respect to the input in order to make a straightforward connection
with the advice complexity classes. This led to use both $n$ and
$u$ in the complexity classes of the form $DTISP\left(t\left(u\right),s\left(u\right)\right)/f\left(n\right)$,
and we had to make frequent and inelegant conversions between $n$
and $u$. To make things clearer, we suggest that in the advice complexity
classes, resource bounds on $\left\langle w,x\right\rangle $ should
be taken with respect to $x$ only. This would not change the main
classes $P/\mathrm{poly}$ and $P/\mathrm{log}$.

One may object that our results relativize. They do as do all diagonalization
results, since our results are some kinds of diagonalizations. That
is also the reason why we think that it would be surprising if one
could do better than exhaustive search in our simulations.

However we believe that it is important to exhibit separation results,
even if they are simple. Indeed we want to recall for example that
in the deep proof that $DLIN\neq NLIN$ of \citet{Paul1983}, the
only separation result invoked is a simple time hierarchy on alternating
Turing machines obtained by diagonalization.

\section*{Acknowledgments}

This research was done while the author was an intern student at LRI
(Orsay, France) under supervision of Sophie Laplante. The author is
very grateful to Sophie Laplante for many helpful discussions.

\bibliographystyle{elsart-harv}
\bibliography{biblio,library}

\end{document}